\newtheorem{theorem}{Theorem}
\title{Adaptive Hardness Negative Sampling for Collaborative Filtering}
\author{
    %Authors
    % All authors must be in the same font size and format.
    Riwei Lai\textsuperscript{\rm 1, 2}, Rui Chen\textsuperscript{\rm 1}\thanks{Corresponding authors}, Qilong Han\textsuperscript{\rm 1}\footnotemark[1], Chi Zhang\textsuperscript{\rm 1}, Li Chen\textsuperscript{\rm 2}\\
}
\begin{document}

\maketitle

\begin{abstract}
Negative sampling is essential for implicit collaborative filtering to provide proper negative training signals so as to achieve desirable performance. We experimentally unveil a common limitation of all existing negative sampling methods that they can only select negative samples of a fixed hardness level, leading to the false positive problem (FPP) and false negative problem (FNP). We then propose a new paradigm called adaptive hardness negative sampling (AHNS) and discuss its three key criteria. By adaptively selecting negative samples with appropriate hardnesses during the training process, AHNS can well mitigate the impacts of FPP and FNP. Next, we present a concrete instantiation of AHNS called AHNS$_{p<0}$, and theoretically demonstrate that AHNS$_{p<0}$ can fit the three criteria of AHNS well and achieve a larger lower bound of normalized discounted cumulative gain. Besides, we note that existing negative sampling methods can be regarded as more relaxed cases of AHNS. Finally, we conduct comprehensive experiments, and the results show that AHNS$_{p<0}$ can consistently and substantially outperform several state-of-the-art competitors on multiple datasets.
\end{abstract}

\section{Introduction}
Collaborative filtering (CF), as the most representative technique for recommendation, focuses on modeling user interests from observed user-item interactions~\cite{WHW19, HDK20}. In many cases, it is not always possible to obtain a large amount of high-quality explicit feedback. As a result, implicit feedback, such as clicks or purchases, has become a default choice to train a CF model~\cite{LCZ23}. In implicit feedback, each observed interaction normally indicates a user's interest in an item and corresponds to a positive training sample. As for negative training samples, a widely adopted approach is to randomly select some uninteracted items for users. An implicit CF model is then optimized to give positive samples higher scores than negative ones~\cite{RFG09}.

Similar to many semi-supervised learning problems, existing implicit CF models highly rely on mining negative samples to provide proper negative training signals. Without auxiliary data describing items, two lines of works have been proposed. The first line consists of \emph{static negative sampling}, which assigns a static probability for each candidate to be sampled. For example, random negative sampling (RNS)~\cite{RFG09} chooses uninteracted items with equal probability, and popularity-biased negative sampling (PNS)~\cite{CSS17, WVS19} adopts item-popularity-biased distributions to favor popular items. The other line is \emph{hard negative sampling}, such as dynamic negative sampling (DNS)~\cite{ZCW13} and disentangled negative sampling (DENS)~\cite{LCZ23}, which focuses on selecting hard negative samples that are difficult to be distinguished from the positive samples with dynamic distributions. Such hard negative samples can provide more informative training signals so that user interests can be better characterized~\cite{XLZ22}.

\begin{figure}[t]
  \centering
  \includegraphics[width=0.9\linewidth]{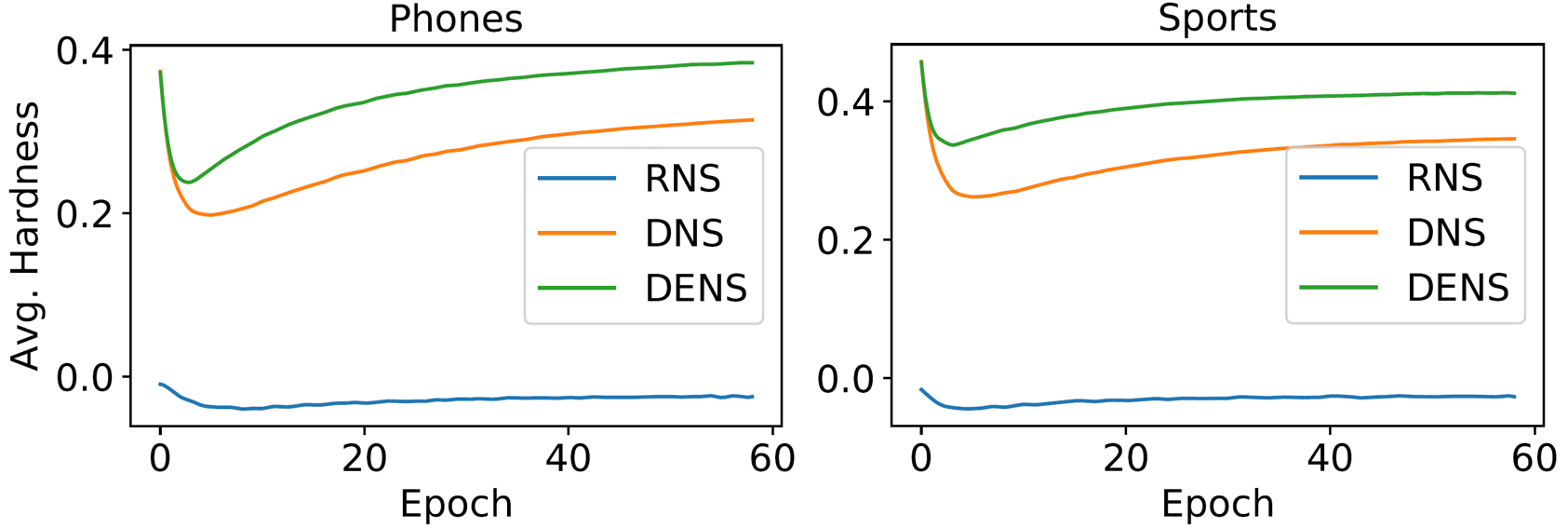}
  \caption{Average hardness of selected negative items in RNS, DNS, and DENS on two Amazon datasets.}
  \label{fig:moti}
\end{figure}

Although the above two lines of works on negative sampling have achieved some promising results, we point out that all these methods can only select negative samples of a certain ``hardness'' level, preventing them from achieving better performance. Without loss of generality, assume that positive samples' predicted scores are always positive. We can define the \textit{hardness} of a negative sample as its relative predicted score, i.e., the ratio of its predicted score to that of its corresponding positive sample, in order to smooth the influence of the simultaneous increase in the predicted scores of all items during the training process. As illustrated in Fig.~\ref{fig:moti}, throughout the training process, RNS %(the blue line) 
can only select easy negative samples with hardness around 0, while DNS %(the orange line) 
and DENS %(the green line) 
can only choose hard negative samples with hardness around 0.3 and 0.4, respectively.

Unavoidably, these fixed hardness negative sampling methods may suffer from two significant problems: (1) \emph{false positive problem} (FPP): as shown in the upper part of Fig.~\ref{fig:hfns}, when only easy negative samples can be selected during the training process, items of no interest but with initially high predicted scores may not be sufficiently updated and will still be recommended to users, resulting in suboptimal recommendation results; (2) \emph{false negative problem} (FNP): as shown in the lower part of Fig.~\ref{fig:hfns}, if only hard negative samples with a fixed hardness level are selected during the training process, items of interests but not interacted yet may be selected as negative and ranked lower in the recommendation list, which worsens recommendation results. We have conducted extensive experiments to verify the existence of FPP and FNP (see RQ2 of Experiments for more details).

\begin{figure}[t]
  \centering
  \includegraphics[width=0.75\linewidth]{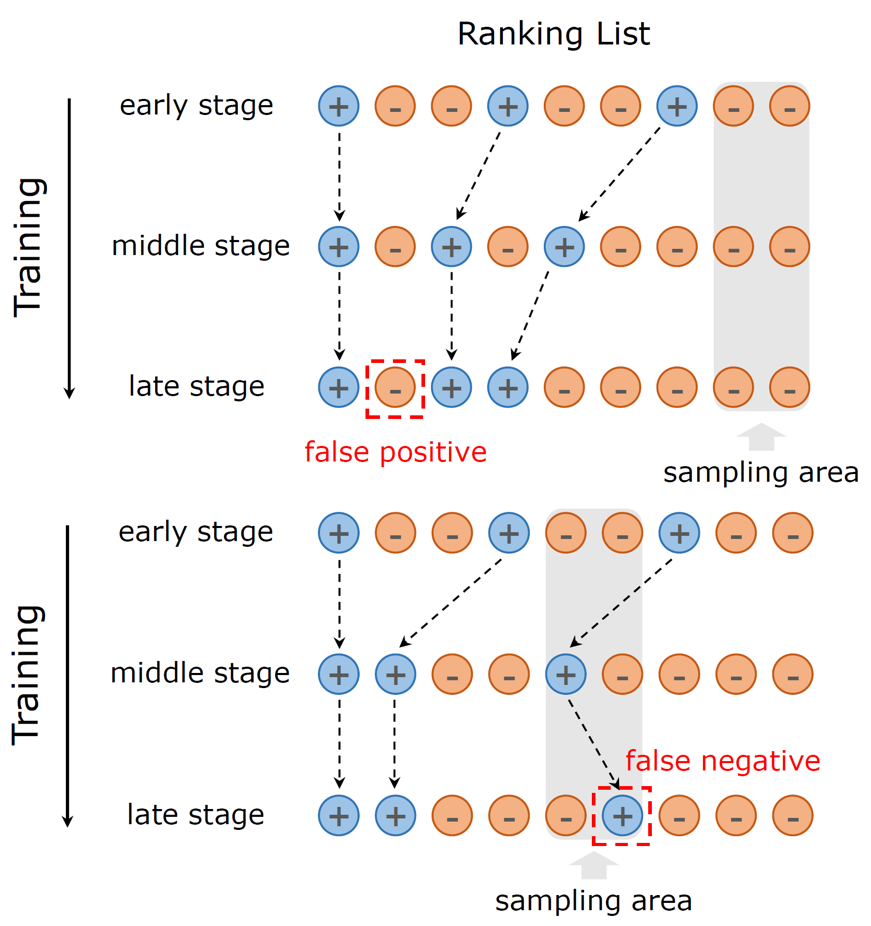}
  \caption{Issues of fixed hardness negative sampling.}
  \label{fig:hfns}
\end{figure}

To address the above two problems and obtain better recommendation results, we propose to adaptively select negative samples with different hardness levels during the training process. A straightforward attempt is to introduce curriculum learning~\cite{CWS21} into negative sampling, where a predefined pacing function is utilized to schedule the hardness levels of negative samples in different training epochs. However, such an implementation still selects negative samples with a fixed hardness level within the same epoch, rather than \textit{adaptively select negative samples with different hardness levels for different positive samples}.

In this paper, we introduce a brand new negative sampling paradigm called \underline{A}daptive \underline{H}ardness \underline{N}egative \underline{S}ampling (AHNS), and analyze its three key criteria. We then present a concrete instantiation of AHNS called AHNS$_{p<0}$, where $p$ is a predefined smoothing parameter we will explain later. Comprehensive theoretical analyses are performed to confirm that AHNS$_{p<0}$ satisfies the three criteria of AHNS and prove that implicit CF models with AHNS$_{p<0}$ can achieve a larger lower bound on normalized discounted cumulative gain (NDCG) than with a fixed hardness negative sampling method. Furthermore, we discuss the relation between AHNS and other negative sampling methods and note that existing negative sampling methods can be considered as more relaxed cases of AHNS. 

Our main contributions are summarized as follows:

\begin{itemize}
    \item We are the first to identify and address FPP and FNP in existing negative sampling methods via adaptively selecting hardnesses of negative samples, which brings a new perspective of negative sampling for implicit CF.
    \item We propose a new negative sampling paradigm AHNS with three criteria, which generalizes existing negative sampling methods. We present a concrete instantiation AHNS$_{p<0}$ and theoretically show that it can fit the three criteria well and achieve a larger lower bound on NDCG.
    \item We conduct extensive experiments to demonstrate that AHNS$_{p<0}$ can achieve significant improvements over several representative state-of-the-art negative sampling methods.
\end{itemize}

\section{Related Work}

\subsection{Static Negative Sampling}
Static negative sampling focuses on identifying good distributions to draw negative samples. For example, as the simplest and most prevalent static negative sampling method, Bayesian personalized ranking (BPR)~\cite{RFG09} randomly selects uninteracted items as negative. However, this method makes it hard to guarantee the quality of selected negative samples, and thus some studies~\cite{CSS17, WVS19, YDZ20} propose to replace the uniform distribution with other distributions. Inspired by the word-frequency-based and node-degree-based negative sampling distributions for network embedding~\cite{MSC13}, NNCF~\cite{CSS17} and NCE-PLRec~\cite{WVS19} adopt an item-popularity-based sampling distribution to select more popular items as negative, which helps to alleviate the widespread popularity bias issue in recommender systems~\cite{CDW23}.

\subsection{Hard Negative Sampling}
Hard negative sampling methods emphasize the importance of oversampling hard negative samples to speed up the training process and find more precise delineations of user interests. More specifically, it is achieved by either assigning higher sampling probabilities to items with larger predicted scores~\cite{ZCW13, DQY20, HDD21, ZZH22, LCZ23, SCF23, ZCL23} or leveraging adversarial learning techniques~\cite{WYZ17, CW18, PC19}. For instance, dynamic negative sampling (DNS)~\cite{ZCW13} selects the item with the highest predicted score in a candidate negative sample set. SRNS~\cite{DQY20} oversamples items with both high predicted scores and high variances to tackle the false negative problem. DENS~\cite{LCZ23} disentangles relevant and irrelevant factors of items and identifies the best negative samples with a factor-aware sampling strategy. Instead of directly selecting negative samples from uninteracted items, MixGCF~\cite{HDD21} synthesizes hard negative samples by mixing positive information into negative samples, which further improves the performance. 

However, we experimentally find that all the above negative sampling methods can only select negative samples of a fixed hardness level during the training process, leading to the false positive problem and false negative problem. Driven by this limitation, we propose an adaptive hardness negative sampling paradigm, which adaptively selects negative samples with appropriate hardnesses and achieves better recommendation results.

\section{Proposed Method}

\subsection{Problem Formulation}
In this section, we formulate the problem of negative sampling in implicit CF. Let $\mathcal{U}$ and $\mathcal{I}$ be the set of users and the set of items, respectively. We denote the set of observed interactions, i.e., implicit feedback, by $\mathcal{O}^+ = \{(u, i^+) \mid u \in \mathcal{U}, i^+ \in \mathcal{I}\}$, where each pair $(u, i^+)$ indicates an interaction between user $u$ and item $i^+$. Implicit CF aims to characterize user interests from their observed interactions. Interacted items are generally used to form positive pairs, while uninteracted items are considered candidates to generate negative samples. Specifically, given a positive pair $(u, i^+)$, a negative sampling strategy identifies an item $i^-$ that has not been previously interacted by $u$ as a negative sample. The implicit CF model is then optimized to give positive pairs higher scores than negative pairs by the Bayesian personalized ranking (BPR) loss function~\cite{RFG09}:

\begin{equation}
    \mathcal{L}_{\mathrm{BPR}} = \sum_{(u, i^+, i^-)}  - \ln \sigma (\mathbf{e}_u^\top \mathbf{e}_{i^+} - \mathbf{e}_u^\top \mathbf{e}_{i^-}),
\end{equation}

\noindent where $\mathbf{e}_u$, $\mathbf{e}_{i^+}$, and $\mathbf{e}_{i^-}$ are the embeddings of user $u$, positive sample $i^+$, and negative sample $i^-$, respectively, the inner product is used to measure the score of positive and negative pairs, and $\sigma (\cdot)$ is the sigmoid function.

\subsection{Method Design}

\begin{figure}[t]
  \centering
  \includegraphics[width=0.8\linewidth]{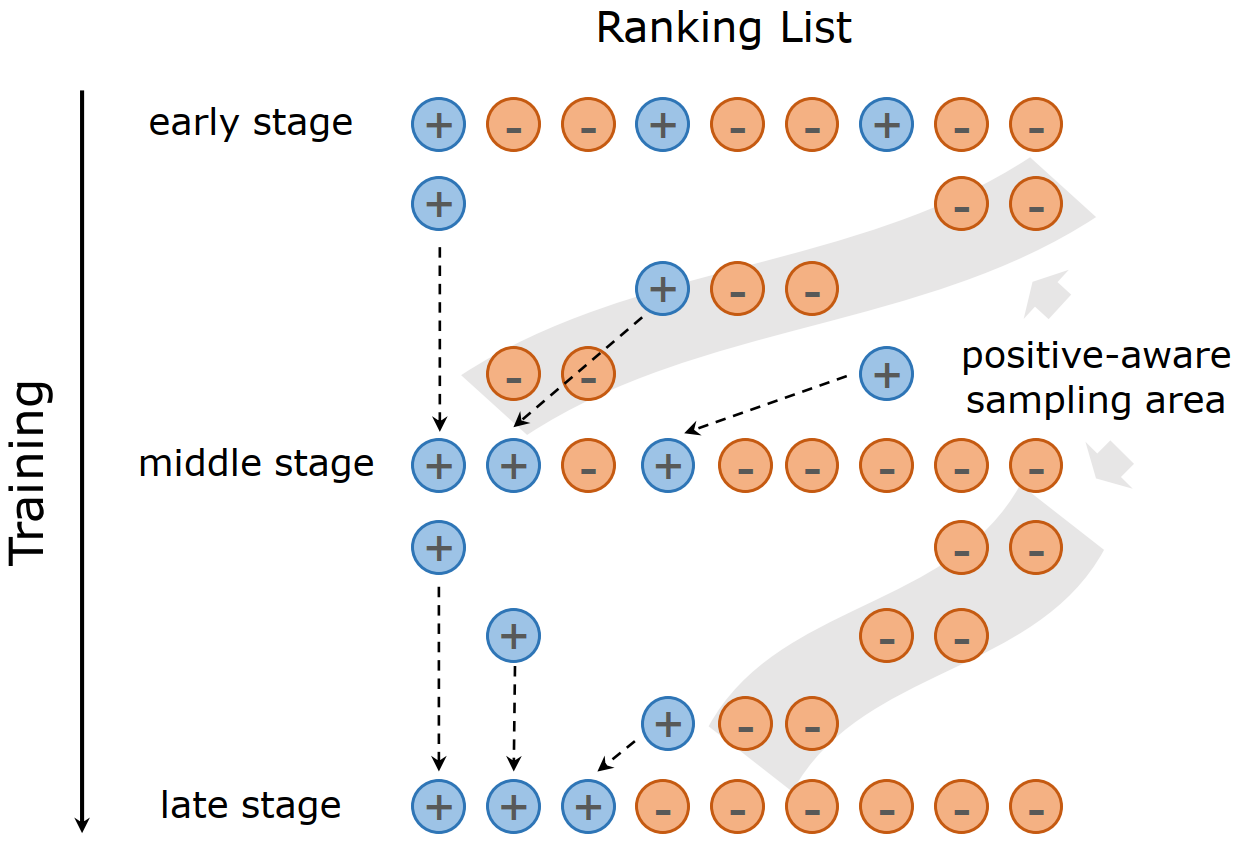}
  \caption{An illustration of adaptive hardness negative sampling.}
  \label{fig:ahns}
\end{figure}

\subsubsection{Paradigm.} 
To achieve the adaptive selection in the hardnesses of negative samples and alleviate the false positive problem (FPP) and false negative problem (FNP), we propose the adaptive hardness negative sampling (AHNS) paradigm. As shown in Fig.~\ref{fig:ahns}, unlike fixed hardness negative sampling, AHNS simultaneously satisfies the following three key criteria:
\begin{itemize}
    \item \textbf{C1: The hardness of a selected negative sample should be positive-aware.} Instead of setting a specific hardness level of negative samples for each training epoch like curriculum learning~\cite{CWS21}, AHNS is expected to identify the appropriate hardness of a negative sample according to its corresponding positive sample.
    \item \textbf{C2: The hardness of a selected negative sample should be negatively correlated with the predicted score of its corresponding positive sample.} On the one hand, for positive samples with higher predicted scores, AHNS should select items with lower hardnesses as negative, which can effectively avoid the FNP. On the other hand, for positive samples with lower predicted scores, AHNS should select items with higher hardnesses as negative, which can accelerate the optimization of positives and enable negatives with higher hardnesses to be sufficiently updated, thus alleviating the FPP.
    \item \textbf{C3: The hardness of selected negative samples should be adjustable.} To cover a variety of practical recommendation scenarios, e.g., different datasets or evaluation metrics~\cite{SCF23}, AHNS should be able to adjust the hardness of selected negative samples.
\end{itemize}

\begin{algorithm}[t]
\caption{AHNS$_{p<0}$}
\label{alg:hans}
\begin{algorithmic}[1]
\State \textbf{Input:} Set of observed interactions $\mathcal{O}^+ = \{(u, i^+) \mid u \in \mathcal{U}, i^+ \in \mathcal{I}\}$, number of candidate negatives $M$, predefined hyperparameters $\alpha$, $\beta$ and $p$
\State \textbf{Output:} Set of training triples $\mathcal{T}$
\State $\mathcal{T} \gets \{\}$ \Comment{Initialize an empty set for training triples}

\For{each positive pair $(u, i^+)$ in $\mathcal{O}^+$}
    \State $\mathcal{C} \gets \{\}$ \Comment{Initialize an empty set for candidate negative samples}
    \For{$m = 1$ to $M$}
        \State $i_m \gets$ Randomly sample an uninteracted item
        \State Add $i_m$ to $\mathcal{C}$
    \EndFor
    \State $\mathcal{R} \gets \{\}$ \Comment{Initialize an empty set for ratings of candidate negative samples}
    \For{each candidate negative sample $i_m$ in $\mathcal{C}$}
        \State $r_m \gets$ $\left\vert \mathbf{e}_u^\top \mathbf{e}_{i_m} - \beta \cdot (\mathbf{e}_u^\top \mathbf{e}_{i^+} + \alpha)^{p+1} \right\vert$
        \State Add $r_m$ to $\mathcal{R}$
    \EndFor
    \State $i^- \gets$ Select $i_m$ with the smallest $r_m$ in $\mathcal{R}$
    \State Add $(u, i^+, i^-)$ to $\mathcal{T}$
\EndFor
\State \Return $\mathcal{T}$
\end{algorithmic}
\end{algorithm}

\subsubsection{Instantiation.}
Next, we give a concrete instantiation of AHNS called AHNS$_{p<0}$, whose entire procedure is detailed in Algo.~\ref{alg:hans}. Specifically, for a positive pair $(u, i^+)$, we follow conventional methods~\cite{CLJ22, LCZ23} and adapt the two-pass sampling idea, which first randomly samples a fixed size of uninteracted items to form a candidate set, and then selects a negative sample from the candidate set according to predefined rating functions and sampling rules. For the first pass, the size $M$ of the candidate set $\mathcal{C}$ is usually much smaller than the total number of items $\left\vert \mathcal{I} \right \vert$, which can boost the sampling efficiency. For the second pass, the rating function and sampling rule play a critical role in identifying the final negative sample and are the focus of all negative sampling methods. Therefore, we introduce three hyperparameters and carefully design the rating function in AHNS$_{p<0}$. For each candidate negative item $i_m \in \mathcal{C}$, the rating function is formulated as:
\begin{equation}
    r_m = \left\vert \mathbf{e}_u^\top \mathbf{e}_{i_m} - \beta \cdot (\mathbf{e}_u^\top \mathbf{e}_{i^+} + \alpha)^{p+1} \right\vert,
\label{eq:rm}
\end{equation}
\noindent where $\alpha > 0$, $\beta > 0$ and $p < 0$ are predefined hyperparameters, whose effects will be given in the subsequent Thm.~\ref{thm:c3}. After calculating the ratings of all candidate negative items, we obtain a rating set $\mathcal{R}$, and then the final negative sample is identified by selecting $i_m$ with the smallest $r_m$ in $\mathcal{R}$:
\begin{equation}
    i^- = i_{\mathop{\arg\min}_m r_m}.
\label{eq:neg}
\end{equation}

\subsection{Theoretical Analysis}
In this section, we conduct in-depth analyses on AHNS$_{p<0}$. We first show that AHNS$_{p<0}$ satisfies the three criteria of AHNS, and then establish that implicit CF models with AHNS$_{p<0}$ can achieve a larger lower bound on normalized discounted cumulative gain than with a fixed hardness negative sampling method as training progresses. 

\begin{theorem}
AHNS$_{p<0}$ satisfies C2 of AHNS.
\label{thm:c2}
\end{theorem}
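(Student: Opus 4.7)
The plan is to unpack what C2 requires in the language of AHNS$_{p<0}$ and then differentiate. Criterion C2 says the hardness of the selected negative must decrease as the predicted score $s := \mathbf{e}_u^\top \mathbf{e}_{i^+}$ of the positive increases. Recall that hardness was defined in the introduction as the ratio $\mathbf{e}_u^\top \mathbf{e}_{i^-} / s$ of the negative's predicted score to the positive's.

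First I would characterize which candidate is picked. Because the rating $r_m = \bigl\vert \mathbf{e}_u^\top \mathbf{e}_{i_m} - \beta (s+\alpha)^{p+1} \bigr\vert$ is minimized over $\mathcal{C}$ in Eq.~(\ref{eq:neg}), the selected $i^-$ is the candidate whose predicted score is closest to the target value $T(s) := \beta (s+\alpha)^{p+1}$. Assuming $\mathcal{C}$ is dense enough (which the authors already implicitly use throughout), we may treat $\mathbf{e}_u^\top \mathbf{e}_{i^-} \approx T(s)$, so the hardness of the selected negative is, to leading order,
\begin{equation*}
h(s) \;=\; \frac{\mathbf{e}_u^\top \mathbf{e}_{i^-}}{s} \;\approx\; \frac{\beta (s+\alpha)^{p+1}}{s}.
\end{equation*}

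Next I would check the monotonicity of $h$ by a direct derivative. A quick quotient-rule computation yields
\begin{equation*}
h'(s) \;=\; \frac{\beta (s+\alpha)^{p}\bigl[(p+1)s - (s+\alpha)\bigr]}{s^{2}} \;=\; \frac{\beta (s+\alpha)^{p}\bigl(p\,s - \alpha\bigr)}{s^{2}}.
\end{equation*}
Under the standing assumptions $\alpha>0$, $\beta>0$, $s>0$ (positive samples have positive predicted scores) and the hypothesis $p<0$, the factor $(s+\alpha)^{p}$ is positive while $p s - \alpha$ is strictly negative, so $h'(s)<0$. Hence $h$ is strictly decreasing in $s$, which is exactly C2.

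I do not expect a serious obstacle here; the only delicate point is the passage from ``$i^-$ minimizes $r_m$ over the finite candidate set $\mathcal{C}$'' to the clean identification $\mathbf{e}_u^\top \mathbf{e}_{i^-} \approx T(s)$ used above. I would handle it by appealing to the same two-pass sampling setup the paper already adopts (large enough $M$, uniform coverage of uninteracted items) so that the minimizer of $r_m$ tracks $T(s)$ up to an $O(1/M)$-type error that does not affect the sign of $h'(s)$. With that caveat noted, the derivative calculation above completes the proof.
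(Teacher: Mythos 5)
Your proof is correct and follows essentially the same route as the paper's: characterize the selected negative's score as the target value $\beta(\mathbf{e}_u^\top\mathbf{e}_{i^+}+\alpha)^{p+1}$ implied by the rating function and the argmin rule, write the hardness as a function of the positive score $s=\mathbf{e}_u^\top\mathbf{e}_{i^+}$, and check the sign of its derivative. The one genuine difference is where the differentiation happens. The paper first approximates the denominator of the hardness, replacing $s$ by $s+\alpha$, so that $\mathrm{Hardness}(i_*^-)\approx\beta(s+\alpha)^{p}$, and then differentiates this simplified expression to get $p\,\beta(s+\alpha)^{p-1}<0$. You instead differentiate the exact ratio $h(s)=\beta(s+\alpha)^{p+1}/s$ by the quotient rule and obtain $h'(s)=\beta(s+\alpha)^{p}(p\,s-\alpha)/s^{2}$, which is indeed strictly negative under $s>0$, $\alpha>0$, $\beta>0$, $p<0$. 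Your version is in fact slightly stronger: it establishes monotonicity of the hardness as literally defined (ratio to $s$, not to $s+\alpha$) and dispenses with the paper's approximation, which is not innocuous when $\alpha$ is comparable to $s$ (the paper tunes $\alpha$ up to $1.0$). Both arguments rest on the same idealization that the chosen candidate's score coincides with the target $\beta(s+\alpha)^{p+1}$; you flag this explicitly as a finite-$M$ discretization error, while the paper absorbs it into the phrase ``ideal negative sample'' -- in either case the sign of the derivative, and hence C2, is unaffected.
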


\begin{proof}
Consider a positive pair $(u, i^+)$. Let $i_*^-$ be the ideal negative sample selected by AHNS$_{p<0}$. According to Eq.~(\ref{eq:rm}) and Eq.~(\ref{eq:neg}), we have: 
\begin{equation}
    \mathbf{e}_u^\top \mathbf{e}_{i_*^-}  = \beta \cdot (\mathbf{e}_u^\top \mathbf{e}_{i^+} + \alpha)^{p+1}.
\end{equation}
To simplify the calculation process, we substitute $\mathbf{e}_u^\top \mathbf{e}_{i^+}$ with $(\mathbf{e}_u^\top \mathbf{e}_{i^+}  + \alpha)$ to calculate the hardness of $i_*^-$:
\begin{align}
    \mathrm{Hardness}(i_*^-) = \frac{\mathbf{e}_u^\top \mathbf{e}_{i_*^-}}{\mathbf{e}_u^\top \mathbf{e}_{i^+}}  &\approx \frac{\mathbf{e}_u^\top \mathbf{e}_{i_*^-}}{\mathbf{e}_u^\top \mathbf{e}_{i^+} + \alpha} \notag \\
    & = \frac{\beta \cdot (\mathbf{e}_u^\top \mathbf{e}_{i^+} + \alpha)^{p+1}}{\mathbf{e}_u^\top \mathbf{e}_{i^+} + \alpha} \notag \\
    & = \beta \cdot (\mathbf{e}_u^\top \mathbf{e}_{i^+} + \alpha)^p.
\label{eq:hard}
\end{align}
Based on the chain rule, we have:
\begin{align}
    \frac{d \mathrm{Hardness}(i_*^-)}{d (\mathbf{e}_u^\top \mathbf{e}_{i^+})} &= \frac{d (\beta \cdot (\mathbf{e}_u^\top \mathbf{e}_{i^+} + \alpha)^p)}{d (\mathbf{e}_u^\top \mathbf{e}_{i^+} + \alpha)} \cdot \frac{d (\mathbf{e}_u^\top \mathbf{e}_{i^+} + \alpha)}{d (\mathbf{e}_u^\top \mathbf{e}_{i^+})} \notag \\
    &= p \cdot \beta \cdot (\mathbf{e}_u^\top \mathbf{e}_{i^+} + \alpha)^{p-1}.
\end{align}
Clearly, $d \mathrm{Hardness}(i_*^-) / d (\mathbf{e}_u^\top \mathbf{e}_{i^+}) < 0$ always holds when $\mathbf{e}_u^\top \mathbf{e}_{i^+} > 0$, $\alpha > 0$, $\beta > 0$ and $p < 0$, which means that the hardness of $i_*^-$ is always negatively correlated with the predicted score of $i^+$ -- the above completes the proof.
\end{proof}

\begin{theorem}
AHNS$_{p<0}$ satisfies C1 of AHNS.
\end{theorem}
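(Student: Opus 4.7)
The plan is to observe that this theorem is essentially a corollary of Theorem~\ref{thm:c2}: the hardness expression derived there already exhibits an explicit dependence on the positive sample, which is exactly what C1 requires. Recall that for the ideal negative sample $i_*^-$ chosen by AHNS$_{p<0}$ given a positive pair $(u, i^+)$, Eq.~(\ref{eq:hard}) gives $\mathrm{Hardness}(i_*^-) \approx \beta \cdot (\mathbf{e}_u^\top \mathbf{e}_{i^+} + \alpha)^p$, which is a function of $\mathbf{e}_u^\top \mathbf{e}_{i^+}$, i.e., the predicted score of the corresponding positive sample.

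Concretely, the steps I would take are: first, restate the target hardness formula derived in Theorem~\ref{thm:c2}; second, read off from the rating function in Eq.~(\ref{eq:rm}) that the term $\beta \cdot (\mathbf{e}_u^\top \mathbf{e}_{i^+} + \alpha)^{p+1}$ is positive-specific, so that different positive pairs induce different targets for the selected negative's score; third, conclude that two positive pairs $(u, i^+)$ and $(u', i'^+)$ with distinct values of $\mathbf{e}_u^\top \mathbf{e}_{i^+}$ and $\mathbf{e}_{u'}^\top \mathbf{e}_{i'^+}$ will in general be paired with negative samples of different hardnesses, which is precisely the positive-awareness demanded by C1.

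I do not anticipate a real obstacle here, since C1 is a qualitative requirement (the hardness must functionally depend on the positive sample) and Theorem~\ref{thm:c2} already establishes a strictly stronger monotonic dependence. The only thing worth being careful about is cleanly separating C1 from C2 in the exposition: C1 asks only that the mapping from positive pair to negative hardness be non-constant in the positive sample, whereas C2 asks for a specific sign of the derivative. Phrasing the proof as "from Eq.~(\ref{eq:hard}), $\mathrm{Hardness}(i_*^-)$ is a non-trivial function of $\mathbf{e}_u^\top \mathbf{e}_{i^+}$, hence positive-aware" should make this short and self-contained, without re-deriving the chain-rule computation already performed for Theorem~\ref{thm:c2}.
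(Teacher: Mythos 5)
Your proposal matches the paper's own argument: the paper likewise takes the hardness formula $\mathrm{Hardness}(i_*^-)=\beta\cdot(\mathbf{e}_u^\top\mathbf{e}_{i^+}+\alpha)^p$ from Eq.~(\ref{eq:hard}), considers two positive pairs with distinct predicted scores, and invokes the monotonicity established in Thm.~\ref{thm:c2} to conclude that the selected negatives have distinct hardnesses, i.e., positive-awareness. Your framing (C1 as a non-constant dependence implied by the strictly monotone dependence of C2) is just a slightly more abstract phrasing of the same proof.
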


\begin{proof}
Consider two different positive pairs $(u, i^+_1)$ and $(u, i^+_2)$. Let $i_{1*}^-$ and $i_{2*}^-$ be the ideal negative samples selected by AHNS$_{p<0}$ for $(u, i^+_1)$ and $(u, i^+_2)$, respectively. According to Eq.~(\ref{eq:hard}), we have: 
\begin{align}
    \mathrm{Hardness}(i_{1*}^-)  &= \beta \cdot (\mathbf{e}_u^\top \mathbf{e}_{i^+_1} + \alpha)^p, \notag \\
    \mathrm{Hardness}(i_{2*}^-)  &= \beta \cdot (\mathbf{e}_u^\top \mathbf{e}_{i^+_2} + \alpha)^p.
\end{align}
It has been proved in Thm.~\ref{thm:c2} that $\mathrm{Hardness}(i_*^-)$ monotonically decreases as $\mathbf{e}_u^\top \mathbf{e}_{i^+}$ increases. Thus when $\mathbf{e}_u^\top \mathbf{e}_{i^+_1} \ne \mathbf{e}_u^\top \mathbf{e}_{i^+_2}$, $\mathrm{Hardness}(i_{1*}^-) \ne \mathrm{Hardness}(i_{2*}^-)$ -- the above completes the proof.
\end{proof}

\begin{figure}[t]
  \centering
  \includegraphics[width=0.75\linewidth]{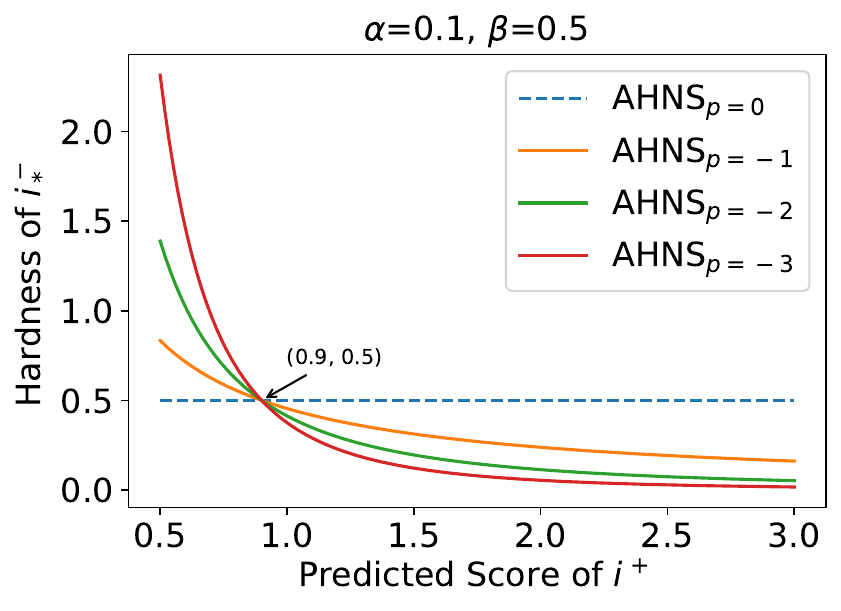}
  \caption{Hardness of ideal negative sample $i_*^-$ \textit{w.r.t.} different $p$.}
  \label{fig:ideal}
\end{figure}

\begin{theorem}
AHNS$_{p<0}$ satisfies C3 of AHNS.
\label{thm:c3}
\end{theorem}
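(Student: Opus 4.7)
The plan is to reuse the closed-form expression $\mathrm{Hardness}(i_*^-) \approx \beta \cdot (\mathbf{e}_u^\top \mathbf{e}_{i^+} + \alpha)^p$ established in Eq.~(\ref{eq:hard}) during the proof of Thm.~\ref{thm:c2}, and to argue that each of the three hyperparameters $\alpha$, $\beta$, $p$ acts as a genuine, independent knob that continuously shifts the selected hardness. I will fix an arbitrary positive pair $(u,i^+)$ with $\mathbf{e}_u^\top \mathbf{e}_{i^+} > 0$ and regard the hardness as a smooth function on the admissible region $\alpha>0$, $\beta>0$, $p<0$.

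The main step is a one-line application of the chain rule to compute the three partial derivatives. I expect $\partial \mathrm{Hardness}(i_*^-)/\partial \beta = (\mathbf{e}_u^\top \mathbf{e}_{i^+}+\alpha)^p$, which is strictly positive; $\partial \mathrm{Hardness}(i_*^-)/\partial \alpha = p\,\beta\,(\mathbf{e}_u^\top \mathbf{e}_{i^+}+\alpha)^{p-1}$, which is strictly negative because $p<0$ and $\beta>0$; and $\partial \mathrm{Hardness}(i_*^-)/\partial p = \beta\,(\mathbf{e}_u^\top \mathbf{e}_{i^+}+\alpha)^p \ln(\mathbf{e}_u^\top \mathbf{e}_{i^+}+\alpha)$, whose sign matches that of $\ln(\mathbf{e}_u^\top \mathbf{e}_{i^+}+\alpha)$ and vanishes only on the degenerate curve where the argument equals one. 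Each of these nonzero derivatives certifies strict monotonicity of the hardness in the corresponding hyperparameter.

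Having established monotonicity, I will close out C3 by noting that the image of the map $\beta \mapsto \mathrm{Hardness}(i_*^-)$ already equals the full positive real line at any fixed $(\alpha,p)$, so every target hardness is attainable; $\alpha$ and $p$ then provide two further degrees of freedom for fine control, which is exactly the qualitative behavior displayed in Fig.~\ref{fig:ideal}, where varying $p$ visibly bends the ideal-negative hardness curve. This proves AHNS$_{p<0}$ satisfies C3.

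The main obstacle is not computational but definitional: C3 is stated informally (\emph{the hardness should be adjustable}), so I need to pin down a precise mathematical rendering before anything can be proved. The cleanest formulation, and the one most compatible with the style of Thm.~\ref{thm:c2} and the preceding result, is \emph{strict monotonicity in each hyperparameter together with a surjective image}; once that definition is fixed, the remainder collapses to the short derivative calculation above.
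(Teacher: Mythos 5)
Your proposal is correct, and it takes a more analytic route than the paper. The paper's own proof of Thm.~\ref{thm:c3} is essentially graphical: it invokes Eq.~(\ref{eq:hard}), plots $\mathrm{Hardness}(i_*^-)=\beta\,(\mathbf{e}_u^\top \mathbf{e}_{i^+}+\alpha)^p$ for several values of $p$ (Fig.~\ref{fig:ideal}), and observes that smaller $p$ produces curves of larger magnitude while every curve passes through the point $(1-\alpha,\beta)$, so that $\alpha$ and $\beta$ also act as tuning knobs. You instead fix a formal reading of ``adjustable'' (strict monotonicity in each hyperparameter plus surjectivity of the image in $\beta$) and verify it by computing the three partial derivatives, all of which you get right: $\partial/\partial\beta=(\mathbf{e}_u^\top\mathbf{e}_{i^+}+\alpha)^p>0$, $\partial/\partial\alpha=p\beta(\mathbf{e}_u^\top\mathbf{e}_{i^+}+\alpha)^{p-1}<0$, and $\partial/\partial p=\beta(\mathbf{e}_u^\top\mathbf{e}_{i^+}+\alpha)^p\ln(\mathbf{e}_u^\top\mathbf{e}_{i^+}+\alpha)$. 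Note that your ``degenerate curve'' where the $p$-derivative vanishes, i.e.\ $\mathbf{e}_u^\top\mathbf{e}_{i^+}+\alpha=1$, is precisely the paper's observation that all curves pass through $(1-\alpha,\beta)$; the two arguments are the same fact seen analytically versus pictorially. What your version buys is rigor and explicit directionality (which way each hyperparameter moves the hardness) without reliance on a figure; what the paper's version buys is a visual that is reused in the proof of the NDCG bound. The only caveat worth flagging is that your surjectivity claim concerns the \emph{ideal} hardness $\beta(\mathbf{e}_u^\top\mathbf{e}_{i^+}+\alpha)^p$; the hardness actually realized is constrained by the candidate items' bounded inner products, but the paper's own reasoning makes the same idealization, so this does not count against you.
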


\begin{proof}
According to Eq.~(\ref{eq:hard}), we plot the curves of the hardness of $i_*^-$ under different values of the predicted score of $i^+$ in Fig.~\ref{fig:ideal}. It is clear that $p$ affects the magnitude of the curves, smaller $p$ leads to larger magnitudes. In addition, all curves pass through the point $(1 - \alpha, \beta)$, indicating the effect of $\alpha$ and $\beta$ in adjusting the hardness of selected negative samples--the above completes the proof.
\end{proof}

\begin{theorem}
As training progresses, implicit CF models with AHNS$_{p<0}$ can achieve a larger lower bound on normalized discounted cumulative gain (NDCG) than with a fixed hardness negative sampling method.
\end{theorem}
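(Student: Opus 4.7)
The plan is to reduce the NDCG lower bound to a lower bound on the positive-versus-negative score gap, and then to invoke Theorems~1--3 together with Eq.~(\ref{eq:hard}) to argue that the gap attainable under AHNS$_{p<0}$ strictly dominates the one attainable under any fixed-hardness sampler as training progresses. I would start by recalling the ranking interpretation of NDCG: for a user $u$ with positive item $i^+$, the rank of $i^+$ equals $1 + |\{j \in \mathcal{I}\setminus\{i^+\} : \mathbf{e}_u^\top \mathbf{e}_j > \mathbf{e}_u^\top \mathbf{e}_{i^+}\}|$, and NDCG is monotonically decreasing in this rank. A lower bound on NDCG therefore follows from a lower bound on the minimum positive-negative gap $\min_{j} (\mathbf{e}_u^\top \mathbf{e}_{i^+} - \mathbf{e}_u^\top \mathbf{e}_j)$ combined with a simple counting (Markov-type) argument that upper bounds how many uninteracted items can outrank $i^+$.

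Next I would compare equilibria of the BPR dynamics under the two samplers. Under AHNS$_{p<0}$, the selected negative satisfies $\mathbf{e}_u^\top \mathbf{e}_{i_*^-} = \beta(\mathbf{e}_u^\top \mathbf{e}_{i^+} + \alpha)^{p+1}$ by Eq.~(\ref{eq:rm}), and by Eq.~(\ref{eq:hard}) the corresponding hardness is $\beta(\mathbf{e}_u^\top \mathbf{e}_{i^+} + \alpha)^{p}$, which tends to $0$ as $\mathbf{e}_u^\top \mathbf{e}_{i^+}$ grows because $p<0$. Consequently the gap $\mathbf{e}_u^\top \mathbf{e}_{i^+} - \beta(\mathbf{e}_u^\top \mathbf{e}_{i^+} + \alpha)^{p+1}$ grows at an asymptotically super-linear rate relative to $\mathbf{e}_u^\top \mathbf{e}_{i^+}$. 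By contrast, any fixed-hardness method keeps $\mathbf{e}_u^\top \mathbf{e}_{i^-} \approx h\cdot \mathbf{e}_u^\top \mathbf{e}_{i^+}$ for some constant $h$, so its achievable gap is at most $(1-h)\mathbf{e}_u^\top \mathbf{e}_{i^+}$; when $h$ is near $1$ (DNS, DENS) the BPR pressure saturates once the representation scale is exhausted, while when $h \approx 0$ (RNS) the sampler provides essentially no pressure to update items with high initial scores, so the gap never closes for those items. In either regime the per-positive gap under AHNS$_{p<0}$ eventually exceeds that of any fixed-hardness baseline, yielding a strictly smaller expected rank for $i^+$ and hence a strictly larger NDCG lower bound.

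The main obstacle will be bridging the per-sample control of the gap against the single selected negative $i_*^-$ to a uniform control over all uninteracted items, which is what enters the rank. I expect to need an explicit sampling-coverage assumption for the two-pass sampler (so that every uninteracted item is visited sufficiently often in expectation) or a mild smoothness assumption on the scoring function (so that score control transfers from sampled negatives to unsampled ones). A secondary technicality is handling the DCG normalizer when comparing lower bounds across samplers; this can be neutralized by fixing a common evaluation query set so that only the numerator differs. With those pieces in place, chaining the bound $\text{NDCG} \geq f(\min\text{-gap})$ with the strict gap comparison above completes the proof.
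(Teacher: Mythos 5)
Your plan follows essentially the same route as the paper's proof: lower-bound $NDCG(u)$ by bounding $IDCG(u)$ by $\left\vert \mathcal{I}(u) \right\vert$ and each rank through the positive--negative score difference, then compare samplers via Eq.~(\ref{eq:hard}), under which the negative selected by AHNS$_{p<0}$ has score $\beta \cdot (\mathbf{e}_u^\top \mathbf{e}_{i^+} + \alpha)^{p+1}$ and hence vanishing hardness as the positive score grows, whereas a fixed-hardness sampler keeps $\mathbf{e}_u^\top \mathbf{e}_{i^-} \approx h \cdot \mathbf{e}_u^\top \mathbf{e}_{i^+}$. The one step where you anticipate needing extra machinery (coverage or smoothness assumptions to pass from the single sampled negative to all uninteracted items entering the rank) is precisely the step the paper handles only heuristically, by replacing $\sum_{j} \exp(\mathbf{e}_u^\top \mathbf{e}_j - \mathbf{e}_u^\top \mathbf{e}_i)$ with the single term for $i^-$ under an approximation sign, so your outline is consistent with, and if anything more demanding than, the paper's own argument.
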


\begin{proof}
Given a user $u$, let $\pi_{f_u}$ be the ranking function induced by recommender system $f$ for user $u$, and $\pi_{f_u}(i)$ the rank of item $i$. Let $y$ be a binary indicator: $y_i = 1$ if item $i$ has been interacted by $u$, otherwise $y_i = 0$. Let $\mathcal{I}(u) = \{i \mid y_i = 1\}$ be the set of items interacted by $u$ and $\mathbb{I}$ be the indicator function.

First, we consider discounted cumulative gain (DCG). With $1 + z \le 2^z$ when $z \geq 1$, we have the following:
\begin{align}
    DCG(u) &= \sum_{i = 1}^{\left\vert \mathcal{I} \right\vert} \frac{2^{y_i} - 1}{\log_2(1 + \pi_{f_u}(i))} \notag \\ 
     & = \sum_{i = 1}^{\left\vert \mathcal{I}(u) \right\vert} \frac{1}{\log_2(1 + \pi_{f_u}(i))} \geq \sum_{i = 1}^{\left\vert \mathcal{I}(u) \right\vert} \frac{1}{\pi_{f_u}(i)} \notag \\
    & = \sum_{i = 1}^{\left\vert \mathcal{I}(u) \right\vert} \frac{1}{1 + \sum_{j \in \left\vert \mathcal{I} \right\vert \setminus \{i\}} \mathbb{I}(\mathbf{e}_u^\top \mathbf{e}_j  - \mathbf{e}_u^\top \mathbf{e}_i > 0)} \notag \\
    & \geq \sum_{i = 1}^{\left\vert \mathcal{I}(u) \right\vert} \frac{1}{1 + \sum_{j \in \left\vert \mathcal{I} \right\vert \setminus \{i\}} \exp(\mathbf{e}_u^\top \mathbf{e}_j  - \mathbf{e}_u^\top \mathbf{e}_i)}.
\end{align}

Next, we consider the ideal DCG (IDCG). Let $\pi_{f_u}^*$ be the ideal ranking function, which can sort the items in the ground truth order:
\begin{align}
    IDCG(u) &= \sum_{i = 1}^{\left\vert \mathcal{I} \right\vert} \frac{2^{y_i} - 1}{\log_2(1 + \pi_{f_u}^*(i))} \notag \\
    & = \sum_{i = 1}^{\left\vert \mathcal{I}(u) \right\vert} \frac{1}{\log_2(1 + i)} \le \sum_{i = 1}^{\left\vert \mathcal{I}(u) \right\vert} 1 = \left\vert \mathcal{I}(u) \right\vert.
\end{align}
Clearly, we have:
\begin{align}
    \frac{1}{IDCG(u)} \geq \frac{1}{\left\vert \mathcal{I}(u) \right\vert}.
\end{align}

Finally, we consider NDCG:
\begin{align}
    NDCG&(u) = \frac{DCG(u)}{IDCG(u)} \geq \frac{1}{\left\vert \mathcal{I}(u) \right\vert} DCG(u) \notag \\ 
    & \geq \frac{1}{\left\vert \mathcal{I}(u) \right\vert} \sum_{i = 1}^{\left\vert \mathcal{I}(u) \right\vert} \frac{1}{1 + \sum_{j \in \left\vert \mathcal{I} \right\vert \setminus \{i\}} \exp(\mathbf{e}_u^\top \mathbf{e}_j  - \mathbf{e}_u^\top \mathbf{e}_i)}  \notag \\
    & \approx \frac{1}{\left\vert \mathcal{I}(u) \right\vert} \sum_{i = 1}^{\left\vert \mathcal{I}(u) \right\vert} \frac{1}{1 + \exp(\mathbf{e}_u^\top \mathbf{e}_{i^-} - \mathbf{e}_u^\top \mathbf{e}_i)}.
\end{align}

As illustrated in Fig.~\ref{fig:ideal}, it is not difficult to derive that as the predicted score of $i^+$ increases, the hardness of $i_*^-$ sampled by AHNS$_{p<0}$ (the solid lines) is lower compared to that of a fixed hardness negative sampling method (the dotted line), leading to a lower value of $\exp(\mathbf{e}_u^\top \mathbf{e}_{i^-} - \mathbf{e}_u^\top \mathbf{e}_i)$. Thus implicit CF models with AHNS$_{p<0}$ can achieve a larger lower bound of NDCG. This completes the proof. 
\end{proof}

\subsection{Discussion}
In this section, we first discuss the relation between AHNS and other negative sampling methods. We point out that existing negative sampling methods can be considered as more relaxed cases that satisfy part of the three criteria of AHNS. For example, DENS~\cite{LCZ23} proposes a positive gating layer to disentangle items' factors for negative sampling. Thus the hardness of selected negative samples becomes positive-aware and satisfies C1 of AHNS. By using an anti-curriculum pacing function to schedule the hardnesses of negative samples for different training epochs, CuCo~\cite{CWS21} partially satisfies C2 of AHNS. To adapt to different datasets and top-$K$ metrics, DNS($M, N$)~\cite{SCF23} adjusts the hardnesses of selected negative samples via predefined hyperparameters, which satisfies C3 of AHNS.

In addition, we note that the main idea of AHNS in negative sampling is consistent with that of focal loss~\cite{LGG17} in object detection, i.e., \textbf{putting more focus on lower-ranked positives and higher-ranked negatives (hard, misclassified examples)}, which may bring some new insights into negative sampling for implicit CF. 

\section{Experiments}
In this section, we perform extensive experiments to evaluate AHNS$_{p<0}$ and answer the following research questions:
\begin{itemize}
    \item \textbf{RQ1:} How does AHNS$_{p<0}$ perform compared with previous negative sampling methods?
    \item \textbf{RQ2:} Does AHNS$_{p<0}$ achieve adaptive selection in the hardnesses of negative samples and alleviate the false positive problem (FPP) and false negative problem (FNP)?
    \item \textbf{RQ3:} What are the impacts of the hyperparameters (e.g., $\alpha$, $\beta$) on AHNS$_{p<0}$?
    \item \textbf{RQ4:} Does AHNS$_{p<0}$ have an advantage in terms of sampling efficiency?
\end{itemize}

\begin{table}[t]
\centering
\begin{tabular}{@{}c|c|c|c|c|c@{}}
\toprule
Dataset &  \makecell[c]{\#user \\ $(\left\vert \mathcal{U} \right \vert)$} & \makecell[c]{\#item \\ $(\left\vert \mathcal{I} \right \vert)$} & \makecell[c]{\#inter. \\ $(\left\vert \mathcal{R} \right \vert)$} & \makecell[c]{avg. inter. \\ per user} & density\\ \midrule
ML-1M & 6.0k & 3.7k & 1000.2k & 165.6 &4.47\%\\ 
Phones & 27.9k & 10.4k & 194.4k & 7.0 &0.07\%\\
Sports & 35.6k & 18.4k & 296.3k & 8.3 &0.05\%\\
Tools & 16.6k & 10.2k & 134.5k & 8.1 &0.08\%\\
\bottomrule
\end{tabular}
\caption{The statistics of four datasets.}
\label{tab:data}
\end{table}

\begin{table*}[t]
\centering
\begin{tabular}{@{}c|ccc|ccc|ccc|ccc@{}}
\toprule
\multirow{2}{*}{Method} & \multicolumn{3}{c|}{ML-1M} & \multicolumn{3}{c|}{Phones} & \multicolumn{3}{c|}{Sports} & \multicolumn{3}{c}{Tools} \\ \cmidrule(l){2-13}  & \textit{R}@20    & \textit{N}@20   & \textit{N}@50   & \textit{R}@20    & \textit{N}@20    & \textit{N}@50   & \textit{R}@20    & \textit{N}@20    & \textit{N}@50    & \textit{R}@20    & \textit{N}@20   & \textit{N}@50   \\ \midrule
RNS & 22.86 & 35.46 & 37.41 & 11.06 & 5.98 & 7.35 & 6.73 & 3.60 & 4.68 & 5.53 & 2.99 & 3.75\\
SSM & 24.87 & \underline{37.74} & \underline{39.71} & 11.37 & 6.13 & 7.48  & 7.08 & 3.80 & 4.87 & 5.72 & 3.10 & 3.88\\ 
DNS & 24.66 & 36.64 & 38.31 & 12.08 & 6.64 & 7.99 & 7.74 & 4.25 & 5.32 & 6.66 & 3.78 & 4.52\\
MixGCF & 24.75 & 37.54 & 38.95 & 12.20 & 6.73 & 8.13 & 7.68 & 4.32 & 5.36 & 6.82 & 3.88 & 4.59 \\
DENS & 25.07 & 37.67 & 39.11 & 12.16 & 6.68 & 8.13 & 7.90 & 4.35 & 5.50 & 6.66 & 3.76 & 4.55 \\
DNS(\textit{M},~\textit{N}) & 25.09 & 37.58 & 39.22 & 12.27 & 6.75 & 8.15 & 7.84 & 4.31 & 5.35 & 6.86 & 3.76 & 4.61 \\
CuCo  & 25.12 & 37.53 & 39.20 & 12.19 & 6.68 & 8.11 & 7.68 & 4.25 & 5.36 & 6.76 & 3.82 & 4.59 \\ \midrule
AHNS$_{p=-1}$ & \underline{25.17} & 37.72 & 39.31 & \underline{13.02} & \underline{7.08} & \underline{8.71} & \underline{8.42} & \underline{4.58} & \textbf{5.82} & \underline{7.27} &\underline{4.02} & \textbf{4.95}\\
AHNS$_{p=-2}$ & \textbf{25.51} & \textbf{38.77} & \textbf{40.57} & \textbf{13.03} & \textbf{7.14} & \textbf{8.74} & \textbf{8.52} & \textbf{4.61} & \underline{5.81} & \textbf{7.42} & \textbf{4.05} & \underline{4.92} \\ \midrule
\textit{Improv.} & 1.6\% & 2.7\% & 2.2\% & 6.2\% & 5.8\% & 7.2\% & 7.8\% & 6.0\% & 5.8\% & 8.2\% & 4.4\% & 7.4\%\\ \bottomrule
\end{tabular}
\caption{Performances (\%) of AHNS$_{p=-1}$, AHNS$_{p=-2}$, and baseline methods. The best results are in bold, and the second best are underlined. Improvements are calculated over the best baseline method and statistically significant with $p\text{-value} < 0.01$.}
\label{tab:per}
\end{table*}

\subsection{Experimental Setup}
\subsubsection{Datasets and Evaluation Metrics.}
We consider four widely used public benchmark datasets in experiments: MovieLens-1M\footnote{https://grouplens.org/datasets/movielens/} (ML-1M), Amazon-Phones\textsuperscript{2} (Phones),  Amazon-Sports\textsuperscript{2} (Sports) and Amazon-Tools\footnote{https://jmcauley.ucsd.edu/data/amazon/} (Tools). Following~\cite{HDK20, SCF23}, we randomly split each user’s interactions into training/test sets with a ratio of 80\%/20\%, and build the validation set by randomly sampling 10\% interactions of the training set. Tab.~\ref{tab:data} summarizes the statistics of the four datasets. We report the recommendation performances in terms of $Recall$@20 ($R$@20) and $NDCG$@\{20, 50\} ($N$@\{20, 50\}), where higher values indicate better performances.

\subsubsection{Baseline Methods.}
We compare AHNS$_{p<0}$ with a wide range of representative negative sampling methods:
\begin{itemize}
    \item \textbf{RNS}~\cite{RFG09} randomly selects uninteracted items as negative.
    \item \textbf{SSM}~\cite{WWG22} achieves better performances by sampling more items as negative. 
    \item \textbf{DNS}~\cite{ZCW13} chooses the item with the highest predicted score in a candidate set as negative.
    \item \textbf{MixGCF}~\cite{HDD21} synthesizes harder negative samples by injecting information from positive samples.
    \item \textbf{DENS}~\cite{LCZ23} identifies better negative samples by disentangling factors of items.
    \item \textbf{DNS($M, N$)}~\cite{SCF23} controls the sampling hardness via predefined hyperparameters.
    \item \textbf{GuCo}~\cite{CWS21} proposes a negative sampling method adopting curriculum learning in graph representation learning. We transfer this method to CF.
\end{itemize}

\subsubsection{Implementation Details.}
% All methods are implemented based on PyTorch. 
We strictly follow the experimental setting in DENS~\cite{LCZ23}. We utilize matrix factorization (MF) as the implicit CF model. The embedding dimension is fixed to 64, and the embedding parameters are initialized with the Xavier method. We optimize all parameters with Adam~\cite{KB15} and use the default learning rate of 0.001 and default mini-batch size of 2,048. The number of training epochs is set to 100. For AHNS$_{p<0}$, the candidate negative size $M$ is searched in the range of $\{4, 8, 16, 32, 64\}$. The hyperparameters $\alpha$ and %is tuned over $\{0.1, 0.2, \cdots, 0.9, 1.0\}$, and 
$\beta$ are tuned over $\{0.1, 0.2, \cdots, 0.9, 1.0\}$ independently. The hyperparameters of all baseline methods are carefully tuned by grid search. Our code is publically available at {\url{https://github.com/Riwei-HEU/AHNS}}.

\subsection{RQ1: Performance Comparison}

Tab.~\ref{tab:per} shows the performances of AHNS$_{p=-1}$, AHNS$_{p=-2}$, and baseline methods. We can observe the following:
\begin{itemize}
    \item Compared to randomly selecting uninteracted items as negative (RNS), increasing the number (SSM) or the hardness (DNS, MixGCF, DENS, etc.) of negative samples leads to a substantial performance improvement.
    \item By introducing curriculum learning into negative sampling, CuCo draws negative samples with different hardnesses in different training epochs, achieving comparable performances to hard negative sampling methods.
    \item Benefiting from positive-aware adaptive selection in the hardnesses of negative samples, AHNS$_{p=-1}$ and AHNS$_{p=-2}$ significantly outperform RNS by on average 20\%. Meanwhile, the two methods also show a huge performance boost over other hard negative sampling methods and the curriculum-learning-based method.
\end{itemize}

\begin{figure}[t]
  \centering
  \subfigure[Phones, Avg. Hardness]{\includegraphics[width=0.43\linewidth]{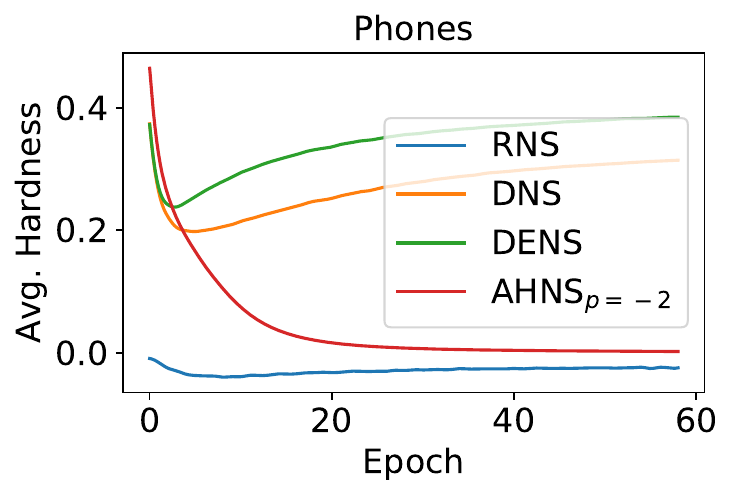} \label{fig:ph}}
  \subfigure[Phones, NDCG@20]{\includegraphics[width=0.40\linewidth]{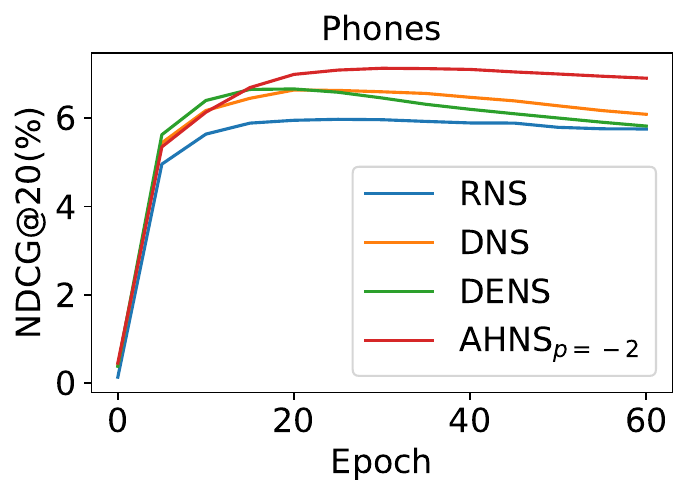} \label{fig:pn}}
  
  \subfigure[Sports, Avg. Hardness]{\includegraphics[width=0.43\linewidth]{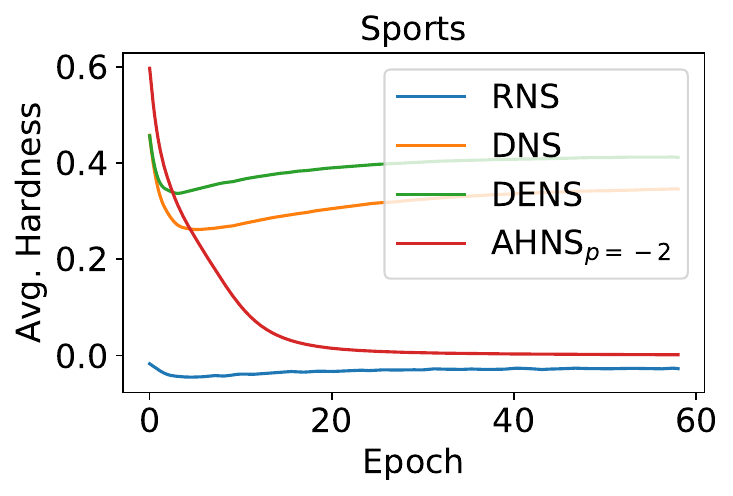} \label{fig:sh}}
  \subfigure[Sports, NDCG@20]{\includegraphics[width=0.40\linewidth]{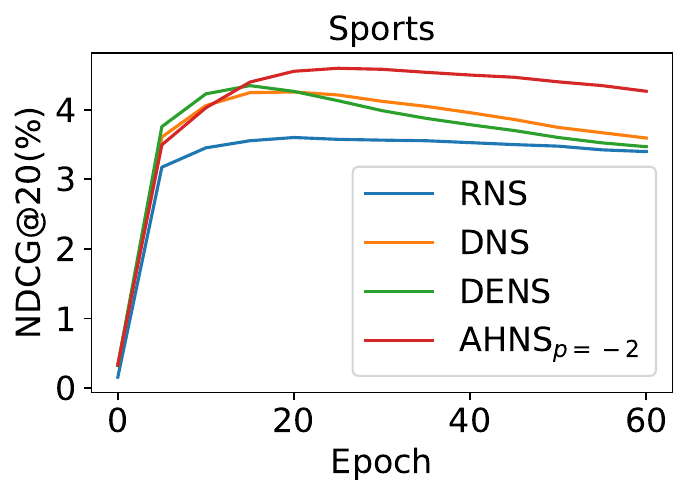} \label{fig:sn}}

  \subfigure[Tools, Avg. Hardness]{\includegraphics[width=0.43\linewidth]{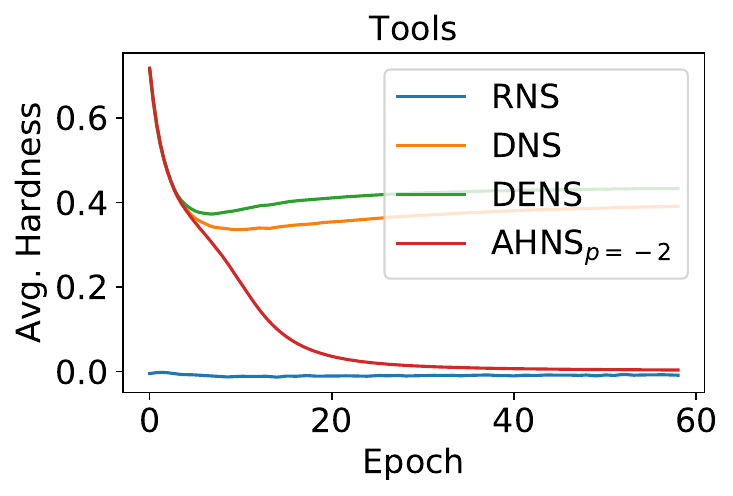} \label{fig:th}}
  \subfigure[Tools, NDCG@20]{\includegraphics[width=0.40\linewidth]{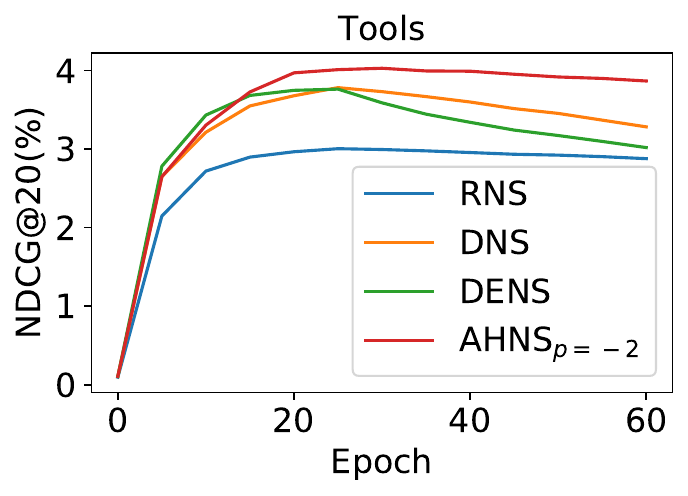} \label{fig:tn}}
  \caption{Average negative hardness and $NDCG$@20 of RNS, DNS, DENS, and AHNS$_{p=-2}$.}
  \label{fig:hard}
\end{figure}

\begin{figure}[t]
  \centering
  \subfigure[Phones, $\beta=0.1$]{\includegraphics[width=0.47\linewidth]{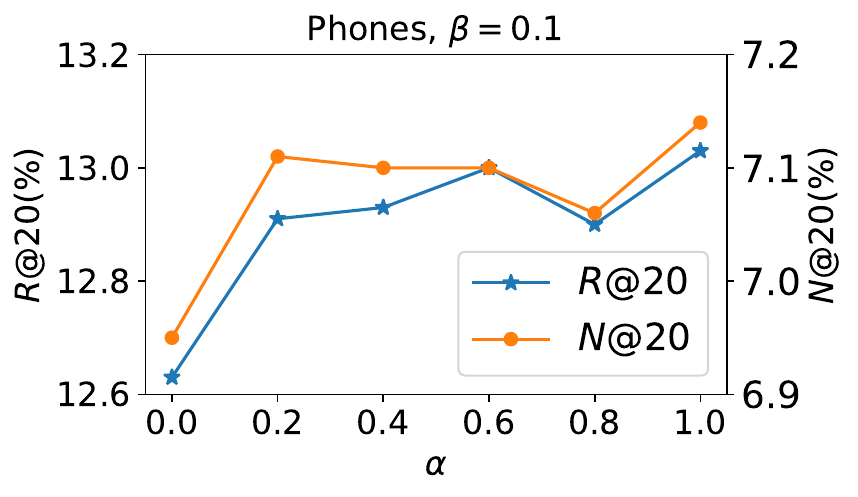}}
  \subfigure[Phones, $\alpha=1.0$]{\includegraphics[width=0.47\linewidth]{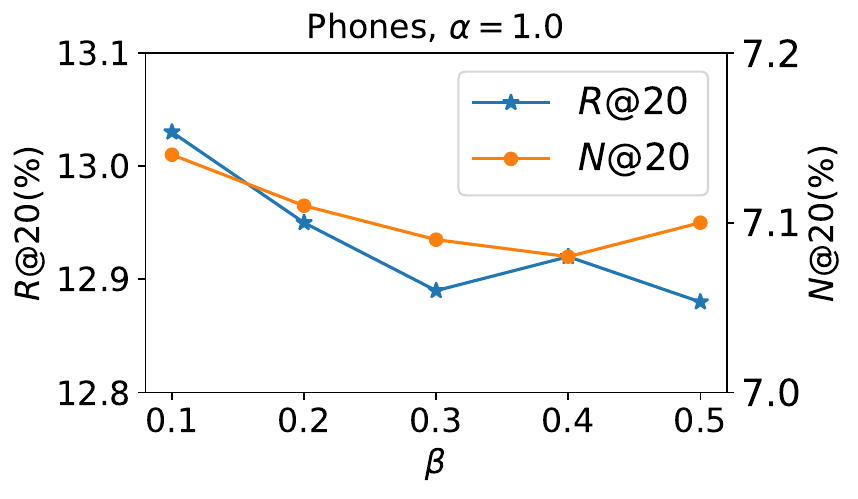}}
  
  \subfigure[Sports, $\beta=0.1$]{\includegraphics[width=0.47\linewidth]{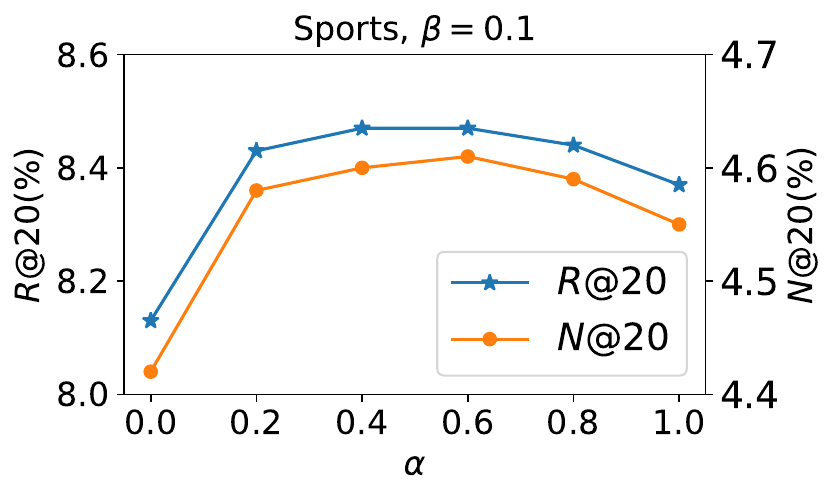}}
  \subfigure[Sports, $\alpha=1.0$]{\includegraphics[width=0.47\linewidth]{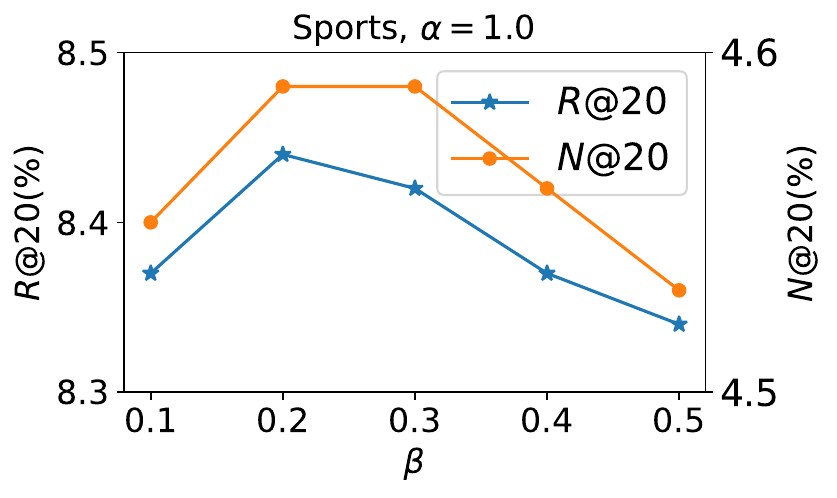}}

  \subfigure[Tools, $\beta=0.1$]{\includegraphics[width=0.47\linewidth]{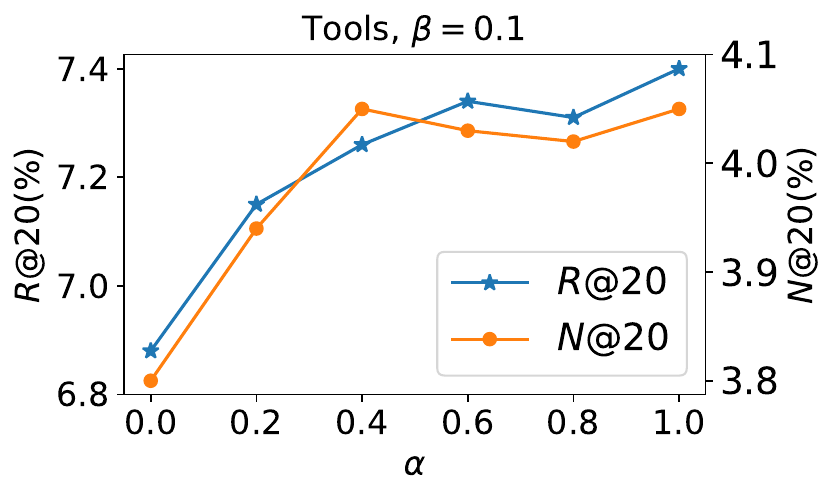}}
  \subfigure[Tools, $\alpha=1.0$]{\includegraphics[width=0.47\linewidth]{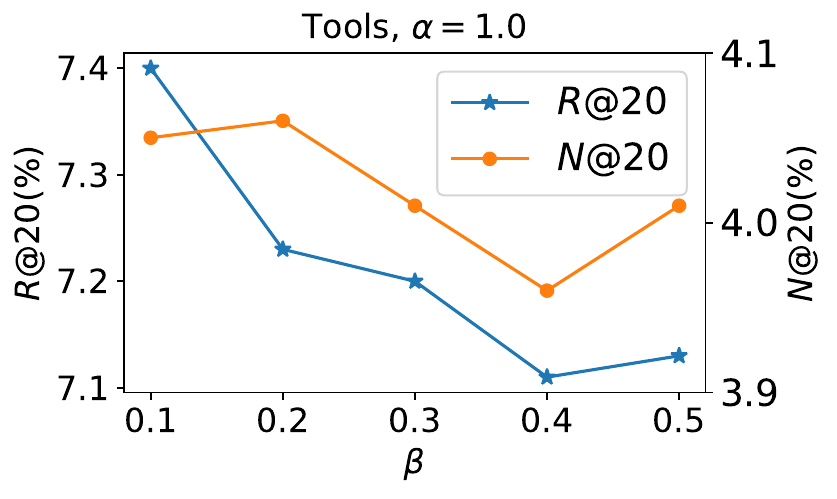}}
  \caption{Performance of AHNS$_{p=-2}$ \textit{w.r.t.} different hyperparameters.}
  \label{fig:hyp}
\end{figure}

\subsection{RQ2: Hardness Visualization}
\label{sec:rq2}

To justify the motivation of AHNS, i.e., adaptively selecting hardnesses of negative samples to alleviate FPP and FNP, we plot the curves of average negative hardness and $NDCG$@20 of RNS, DNS, DENS, and AHNS$_{p=-2}$ in Fig.~\ref{fig:hard} on the three Amazon datasets. From these figures, we have the following key findings:
\begin{itemize}
    \item As shown in Fig.~\ref{fig:ph}, \ref{fig:sh} and \ref{fig:th}, compared to fixed hardness negative sampling methods RNS, DNS, and DENS, AHNS$_{p=-2}$ can adaptively adjust the hardnesses of negative samples as training progresses. Specifically, in the early stages of training, AHNS$_{p=-2}$ favors negative samples with higher hardnesses, while in the later stages of training, AHNS$_{p=-2}$ prefers negative samples with lower hardnesses.
    \item As shown in Fig.~\ref{fig:pn}, \ref{fig:sn} and \ref{fig:tn}, the performance of RNS peaks in the early stages of training and remains stable thereafter, DNS and DENS present better performance than RNS but suffer a significant performance drop in the later stages of training, and AHNS$_{p=-2}$ achieves the best performance while maintaining similar stability as RNS.
    \item The average negative hardness and $NDCG$@20 of RNS (the blue line) well verify the existence of FPP, i.e., when only easy negative samples can be selected during the training process, items of no interest but with initially high predicted scores may not be sufficiently updated and will still be recommended to users, leading to the suboptimal performance of RNS. 
    \item The average negative hardness and $NDCG$@20 of DNS (the orange line) and DENS (the green line) well verify the existence of FNP, i.e., when only hard negative samples can be selected during the training process, items of interests may be selected as negative and ranked lower in the recommendation list, resulting in the performance drop of DNS and DENS.
    \item The average negative hardness and $NDCG$@20 of AHNS$_{p=-2}$ (the red line) well justify our motivation. For positives with lower predicted scores, by selecting items with higher hardnesses as negative, AHNS$_{p=-2}$ well alleviates FPP and achieves a higher peak; for positives with higher predicted scores, by selecting items with lower hardnesses as negative, AHNS$_{p=-2}$ well avoids FNP and thus prevents the performance drop.
\end{itemize}

\subsection{RQ3: Hyperparameter Study}

As discussed in Thm.~\ref{thm:c3}, hyperparameters $\alpha$ and $\beta$ affect the hardnesses of selected negative samples. Here we study how these hyperparameters affect the recommendation performance. Fig.~\ref{fig:hyp} shows $Recall$@20 and $NDCG$@20 of AHNS$_{p=-2}$ under different $\alpha$ or $\beta$ values with other hyperparameters unchanged on the three Amazon datasets. %From these figures, 
We can see that it is intractable to identify the optimal values of $\alpha$ and $\beta$ since they are different across datasets and evaluation metrics. However, in practice, we can achieve desirable performance in a relatively wide range of $\alpha$ or $\beta$ values, %is sufficient to achieve considerable performance, 
which relieves the overhead of hyperparameter tuning.

\begin{figure}[t]
  \centering
  \includegraphics[width=0.72\linewidth]{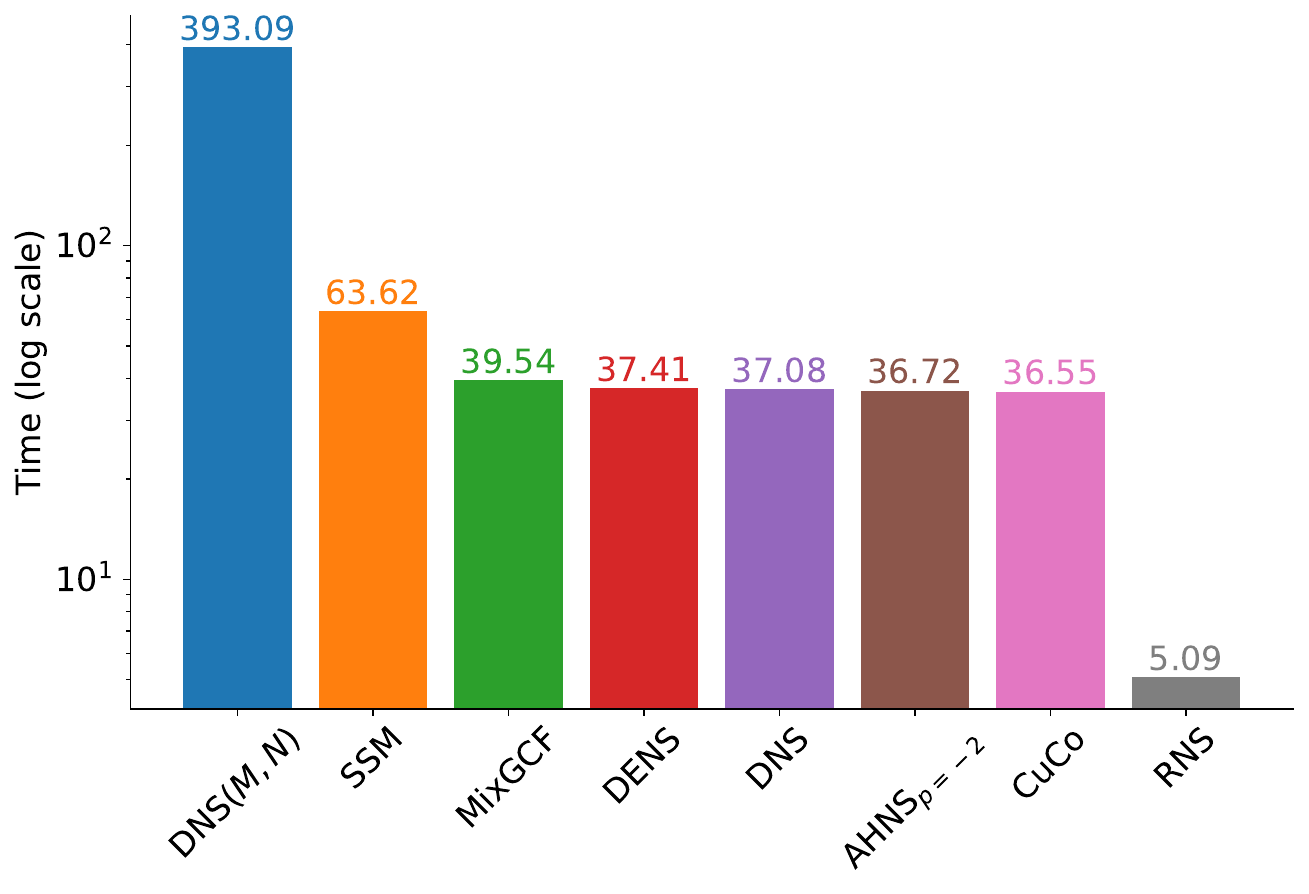}
  \caption{Time (second) for training per epoch on ML-1M \textit{w.r.t.} different methods. Best viewed in color.}
  \label{fig:time}
\end{figure}

\subsection{RQ4: Efficiency Analysis}

As presented in Algo.~\ref{alg:hans}, AHNS$_{p<0}$ does not introduce additional time cost compared to the simplest hard negative sampling method DNS. Here we empirically compare the time for training each epoch of AHNS$_{p=-2}$ and other baseline methods on the ML-1M dataset. All the methods are implemented under the same framework and with optimal hyperparameters to ensure fairness. The results are shown in Fig.~\ref{fig:time}. DNS($M, N$) takes the longest time as it requires an extremely large candidate negative set to adjust the hardness of negative samples. SSM costs the second longest time because multiple negative samples are selected to participate in the training of the CF model. The time difference between AHNS$_{p=-2}$ and other hard negative sampling methods is marginal, and RNS undoubtedly takes the least time. Considering the performance improvements in Tab.~\ref{tab:per} that AHNS$_{p=-2}$ can bring, we believe that AHNS$_{p=-2}$ is the best negative sampling method in terms of both efficiency and performance.

\section{Conclusion}
In this paper, we propose a new negative sampling paradigm AHNS with three key criteria, which enables adaptive selection of hardnesses of negative samples to alleviate FPP and FNP. We devise a concrete instantiation AHNS$_{p<0}$ and theoretically demonstrate that it can well fit the three criteria of AHNS and achieve a larger lower bound of NDCG. Comprehensive experiments confirm that AHNS$_{p<0}$ provides a promising new research direction for negative sampling to further boost implicit CF models' performance.

\clearpage

\section{Acknowledgments}
This work was supported by the Heilongjiang Key R\&D Program of China under Grant No. GA23A915 and the National Natural Science Foundation of China under Grant No. 62072136. It was also partially supported by Hong Kong Baptist University IG-FNRA project under Grant No. RC-FNRA-IG/21-22/SCI/01.

\bibliography{aaai24}

\end{document}